\documentclass[runningheads]{llncs}
\usepackage[T1]{fontenc}
\usepackage{amsmath, amssymb}
\usepackage{mathtools}
\usepackage{comment}

\usepackage{tikz}
\usetikzlibrary{fit,backgrounds, arrows.meta, calc, positioning, matrix}

\usepackage{algorithm}
\usepackage{algpseudocode}
\usepackage{placeins} % in preamble
\usepackage{url}
\newcommand{\doi}[1]{\url{https://doi.org/#1}}
\newcommand{\asc}{\operatorname{asc}}

\spnewtheorem{fact}[theorem]{Fact}{\bfseries}{\itshape}

\providecommand{\qedsymbol}{\ensuremath{\square}}
\newcommand{\qedbox}{\hfill\qedsymbol}

\spnewtheorem{claimnum}[theorem]{Claim}{\bfseries}{\itshape}



\title{Exact Accepting-State Spectrum for Reversal of Permutation Automata}
\titlerunning{}

\usepackage{bbding}
\author{Samuel German \Envelope}
\authorrunning{Samuel German}
\institute{University of California, San Diego, USA \\
\email{sgerman@ucsd.edu}}

\begin{document}
\maketitle
\begin{abstract}
We determine the accepting-state spectrum of reversal for permutation automata
exactly, thereby proving the Rauch--Holzer conjecture on this operation. For
every \(m\ge 2\) and every \(\alpha\ge 2\), we construct a binary permutation
automaton \(A_{m,\alpha}\) such that
\(
\asc(L(A_{m,\alpha}))=m\) and \(
\asc(L(A_{m,\alpha})^{R})=\alpha\).
Combined with the trivial cases \(m=0\) and \(m=1\), and with the previously
known fact that \(1\) is magic for every \(m\ge 2\), this yields the exact
spectrum
\[
g^{\asc}_{R,\mathrm{PFA}}(m)=
\begin{cases}
\{0\}, & \text{if } m=0,\\[1mm]
\{1\}, & \text{if } m=1,\\[1mm]
\mathbb{N}_{\ge 2}, & \text{if } m\ge 2.
\end{cases}
\]
Thus reversal has, for permutation automata, the simplest possible exact
accepting-state spectrum compatible with the single nontrivial obstruction at
value \(1\). The proof uses a uniform group-theoretic witness family: the
states of the forward automaton are the \(\alpha\)-subsets of \([n]\), where
\(n=m+\alpha-1\), under the action generated by an \(n\)-cycle and a
transposition, while the accepting states form a single star family. After
reversal, the reachable subset-states are exactly the stars, which makes it
possible to count the accepting reachable states precisely and to prove
minimality of the reachable reverse automaton.
\keywords{accepting-state complexity; permutation automata; reversal; exact spectrum; magic numbers}
\end{abstract}
\section{Introduction}
\label{sec:introduction}

This paper determines the accepting-state spectrum of reversal for permutation
automata exactly. The main result proves a conjecture of Rauch and Holzer by
showing that, once the trivial cases \(m=0\) and \(m=1\) are separated, the
only magic value for reversal is \(1\).

Accepting-state complexity measures descriptional complexity via the number of
accepting states rather than the total number of states. For a regular language
\(L\), the quantity \(\asc(L)\) is the minimum number of final states among all
deterministic finite automata accepting \(L\). This notion was introduced and
systematically studied by Dassow~\cite{Dassow2016}, who in particular proved
that the minimal DFA of a language already realizes its accepting-state
complexity.

Permutation automata are a classical subclass of deterministic automata, with
early connections to group languages and pure-group events appearing already in
the work of McNaughton~\cite{McNaughton1967} and Thierrin~\cite{Thierrin1968};
see also Hospod{\'a}r and Mlyn{\'a}r{\v c}ik~\cite{HospodarMlynarcik2020} for a
modern treatment of operations on permutation automata. Building on this line,
Rauch and Holzer~\cite{RauchHolzer2023} initiated the systematic study of
accepting-state complexity for operations on permutation automata. Among the
operations they considered, reversal remained one of the few cases whose exact
accepting-state spectrum was not completely determined.

For reversal, Rauch and Holzer proved that the value \(1\) is magic for every
\(m\ge2\), completely settled the case \(m=2\), and conjectured that no further
obstruction exists~\cite{RauchHolzer2023}. Our contribution is a constructive
proof of this conjecture. More precisely, for every \(m\ge2\) and every
\(\alpha\ge2\), we construct a \emph{binary} permutation automaton
\(A_{m,\alpha}\) such that
\(\asc(L(A_{m,\alpha}))=m\) and \(\asc(L(A_{m,\alpha})^R)=\alpha\).
Equivalently, we obtain the exact spectrum
\[
g^{\asc}_{R,\mathrm{PFA}}(m)=
\begin{cases}
\{0\}, & \text{if } m=0,\\[1mm]
\{1\}, & \text{if } m=1,\\[1mm]
\mathbb{N}_{\ge2}, & \text{if } m\ge2.
\end{cases}
\]
Thus the reversal conjecture of Rauch and Holzer is proved.

The proof is uniform and group-theoretic in flavor. For fixed \(m\ge2\) and
\(\alpha\ge2\), the states of the forward witness are the \(\alpha\)-subsets of
\([n]\), where \(n=m+\alpha-1\), under the natural action of the symmetric
group generated by an \(n\)-cycle and a transposition. The final states form a
single star family \(\mathcal{S}(T_0)\) determined by an \((\alpha-1)\)-subset
\(T_0\). After reversal, the reachable subset-states are exactly the stars
\(\mathcal{S}(T)\), and the accepting reachable states are precisely those with
\(T\subseteq q_I\), where \(q_I\) is the original initial state. This yields
exactly \(\alpha\) accepting states in the reachable part of the reverse
automaton, while the same symmetric-group action yields minimality. In
particular, the full spectrum is obtained already over a binary alphabet.

Conceptually, the theorem shows that reversal imposes only one nontrivial
obstruction on the accepting-state spectrum of permutation automata: the value
\(1\) remains magic for every \(m\ge2\), but every larger value is attainable.
In this sense, reversal has the simplest possible exact spectrum compatible
with the obstruction discovered by Rauch and Holzer~\cite{RauchHolzer2023}.

Section~\ref{sec:preliminaries} collects the required background on
accepting-state complexity, reversal, and group actions.
Section~\ref{sec:reversal} introduces the witness family, proves the
attainability theorem, and derives the exact spectrum as a corollary.
\section{Preliminaries}
\label{sec:preliminaries}

All automata in this paper are deterministic and complete. For \(n \ge 1\), we write
\([n] := \{1,\dots,n\}\). If \(Q\) is a finite set and \(a : Q \to Q\) is a transition map,
we write \(q \cdot a\) for the image of \(q\), and extend this notation to words
\(w \in \Sigma^{*}\) in the usual way.

A \emph{deterministic finite automaton} (DFA) is a tuple
\(A = (Q,\Sigma,\cdot,s,F)\),
where \(Q\) is a finite state set, \(\Sigma\) is a finite input alphabet,
\(\cdot : Q \times \Sigma \to Q\) is the transition function, \(s \in Q\) is the
initial state, and \(F \subseteq Q\) is the set of final states. The language
accepted by \(A\) is
\(L(A) := \{\, w \in \Sigma^{*} \mid s \cdot w \in F \,\}\).
A DFA is \emph{minimal} if no equivalent DFA has fewer states.

A \emph{permutation automaton} is a DFA in which, for every \(a \in \Sigma\), the map
\(q \mapsto q \cdot a\) is a permutation of \(Q\). Hence every word \(w \in \Sigma^{*}\)
induces a permutation \(\pi_{w}\) of \(Q\), given by
\(q \pi_{w} = q \cdot w \) for \(q \in Q\).

For a regular language \(L\), the \emph{accepting-state complexity} of \(L\),
denoted by \(\asc(L)\), is the minimum number of final states among all DFAs
accepting \(L\).

For a language \(K \subseteq \Sigma^{*}\), we write
\(
K^{R} := \{\, w^{R} \mid w \in K \,\}
\)
for the reversal of \(K\).

\[
g^{\asc}_{R,\mathrm{PFA}}(m)
  := \left\{\, \alpha \ge 0 \;\middle|\;
  \begin{aligned}[t]
    &\exists\,K \text{ accepted by a permutation automaton}\\
    &\text{with }\asc(K)=m \text{ and } \asc(K^{R})=\alpha
  \end{aligned}
  \right\}.
\]

The following standard fact, due to Dassow~\cite{Dassow2016}, allows us to
identify accepting-state complexity directly from a minimal witness automaton.

\begin{proposition}[{\cite{Dassow2016}}]
\label{prop:minimal-realizes-asc}
Let \(A=(Q,\Sigma,\cdot,s,F)\) be a minimal DFA. Then
\[\asc(L(A)) = |F|.\]
\end{proposition}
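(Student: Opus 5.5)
The plan is the standard argument from Dassow's paper. Let \(A=(Q,\Sigma,\cdot,s,F)\) be minimal and put \(L=L(A)\). Since \(A\) itself is a DFA accepting \(L\) with \(|F|\) final states, we immediately get \(\asc(L)\le |F|\). The substance is the reverse inequality: every DFA \(B=(P,\Sigma,\cdot,t,G)\) with \(L(B)=L\) must have \(|G|\ge |F|\).

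To prove it, I would first discard unreachable states of \(B\). This cannot increase \(|G|\), so we may assume \(B\) is accessible. Then I would use the canonical map from \(B\) to the minimal automaton. For \(p\in P\), choose a word \(u\) with \(t\cdot u=p\) and set \(h(p):=s\cdot u\).

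The key step is checking that \(h\) is well defined. Suppose \(t\cdot u=t\cdot v\). Then for every word \(w\),
\[
uw\in L \iff vw\in L,
\]
so \(u\) and \(v\) lie in the same Myhill--Nerode class. Because \(A\) is minimal, distinct states of \(A\) are inequivalent, and each state is reachable (otherwise deleting it would give a smaller equivalent DFA). Hence \(s\cdot u=s\cdot v\), and \(h\) is well defined.

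Next, \(h\) is surjective, because every state of \(A\) is reachable. Moreover, \(h\) maps final states exactly onto final states: \(p\in G\) iff \(u\in L\) iff \(s\cdot u\in F\). So \(h^{-1}(F)=G\), and surjectivity gives \(h(G)=F\). Therefore \(|G|\ge |h(G)|=|F|\).

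The only point needing care is using minimality of \(A\) to get two facts: all states of \(A\) are reachable, and distinct states of \(A\) are pairwise distinguishable. Both are standard consequences of minimality in the Myhill--Nerode theory. Once they are in hand, the argument is routine.
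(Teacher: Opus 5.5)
Your proof is correct: $h$ is well defined because $t\cdot u=t\cdot v$ makes $s\cdot u$ and $s\cdot v$ equivalent states, hence equal in the minimal $A$, and surjectivity together with $h^{-1}(F)=G$ gives $|G|\ge|F|$. The paper gives no proof of its own and only cites Dassow; your surjective-morphism argument is the standard reasoning behind that result (equivalently, deleting unreachable states and merging equivalent states never increases the number of final states, and doing both to any DFA for $L$ yields the minimal DFA).
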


\begin{lemma}
\label{lem:positive-words-generate-group}
Let \(X\) be a finite set and let \(c_1,\dots,c_r\in \mathrm{Sym}(X)\). Then
the submonoid of \(\mathrm{Sym}(X)\) generated by \(c_1,\dots,c_r\) is equal
to the subgroup \(\langle c_1,\dots,c_r\rangle\). Equivalently, every
element of \(\langle c_1,\dots,c_r\rangle\) is induced by some positive word
over \(c_1,\dots,c_r\).
\end{lemma}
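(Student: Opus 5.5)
Let \(M\) denote the submonoid of \(\mathrm{Sym}(X)\) generated by \(c_1,\dots,c_r\), that is, the set of all finite products of the \(c_i\), the empty product being the identity. Let \(G=\langle c_1,\dots,c_r\rangle\). Since \(M\) consists of products of elements of \(G\), we have \(M\subseteq G\) at once. For the reverse inclusion it suffices to show that \(M\) is closed under inverses. Then \(M\) is a subgroup of \(\mathrm{Sym}(X)\) containing every \(c_i\), and by minimality of the generated subgroup \(G\subseteq M\).

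The key step is finiteness. Because \(X\) is finite, \(\mathrm{Sym}(X)\) is a finite group, so every element \(g\) has finite order \(k\ge 1\) with \(g^k=\mathrm{id}\). Hence \(g^{-1}=g^{k-1}\), a nonnegative power of \(g\). Applied to a generator, \(c_i^{-1}=c_i^{k_i-1}\in M\). More generally, if \(g\in M\), then \(g^{-1}=g^{k-1}\in M\), since \(M\) is closed under products and contains \(\mathrm{id}\). So \(M\) is closed under inverses, and \(M=G\).

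For the equivalent formulation, note that every element of \(M\) is by definition induced by a positive word \(c_{i_1}\cdots c_{i_\ell}\) over the generators. The identity is the empty word, or \(c_1^{k_1}\) if one insists on nonempty words. Hence every element of \(G\) is induced by such a word.

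There is no real obstacle. The only point needing care is that the argument relies on finiteness of \(\mathrm{Sym}(X)\): in an infinite group the submonoid generated by a set can be strictly smaller than the subgroup, as with \(\mathbb{N}\subsetneq\mathbb{Z}\).
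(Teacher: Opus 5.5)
Your proof is correct and follows essentially the same route as the paper: both use the finiteness of \(\mathrm{Sym}(X)\) to write \(c_i^{-1}=c_i^{d_i-1}\) as a nonnegative power, so that it lies in the submonoid \(M\). The only cosmetic difference is the final step. You show that all of \(M\) is closed under inverses and then invoke minimality of the generated subgroup. The paper shows only that the inverses of the generators lie in \(M\) and rewrites words over \(c_i^{\pm1}\) as positive words.
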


\begin{proof}
Let \(M\) be the submonoid of \(\mathrm{Sym}(X)\) generated by
\(c_1,\dots,c_r\). Clearly \(M\subseteq \langle c_1,\dots,c_r\rangle\).

For the reverse inclusion, it suffices to show that \(c_i^{-1}\in M\) for
each \(i\in \{1,\dots,r\}\). Since \(X\) is finite and \(c_i\in\mathrm{Sym}(X)\),
the permutation \(c_i\) has finite order \(d_i\ge 1\). Hence
\(c_i^{-1}=c_i^{d_i-1}\in M\).
Therefore every word over \(c_1^{\pm1},\dots,c_r^{\pm1}\) can be rewritten as
a positive word over \(c_1,\dots,c_r\), so
\(\langle c_1,\dots,c_r\rangle\subseteq M\). Thus
\(M=\langle c_1,\dots,c_r\rangle\). \qedbox
\end{proof}

We now record the reversal construction used later.

\begin{lemma}[Reverse subset construction]
\label{lem:reverse-subset}
Let \(A=(Q,\Sigma,\cdot,s,F)\) be a DFA. Define
\(A^{R} = \bigl(2^{Q},\Sigma,\delta^{R},F,\mathcal{F}\bigr)\),
where
\(\delta^{R}(S,a)
   := \{\, q \in Q \mid q \cdot a \in S \,\} \) for all
   \(S \subseteq Q,\ a \in \Sigma\),

and
\(
\mathcal{F} := \{\, S \subseteq Q \mid s \in S \,\}\).
Then
\(L(A^{R}) = L(A)^{R}\).
More generally, for every \(S\subseteq Q\) and every \(w\in\Sigma^{*}\),
\(\delta^{R}(S,w)=\{\, q\in Q \mid q\cdot w^{R}\in S \,\}\).
If \(A\) is a permutation automaton, then every letter acts bijectively on \(Q\),
and therefore
\(
\delta^{R}(S,a) = a^{-1}(S) \) for all \(
S \subseteq Q,\ a \in \Sigma\).
In particular, each letter acts bijectively on \(2^{Q}\), so \(A^{R}\) is itself
a permutation automaton. Consequently, for every \(w\in\Sigma^{*}\),
\(\delta^{R}(S,w)=\pi_{w^{R}}^{-1}(S)\).
\end{lemma}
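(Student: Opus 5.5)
The plan is to verify each claim in turn by unfolding the definitions, proving the general formula for words first and deducing \(L(A^R)=L(A)^R\) from it.

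\textbf{Formula for words.} I will show
\(\delta^{R}(S,w)=\{q\in Q\mid q\cdot w^{R}\in S\}\) by induction on \(|w|\), where \(\delta^R\) is extended to words in the usual left-to-right manner.
For \(w=\varepsilon\), both sides equal \(S\).
For \(w=ua\) with \(a\in\Sigma\), we have \(\delta^R(S,ua)=\delta^R(\delta^R(S,u),a)\). This is the set of \(q\) with \(q\cdot a\in\delta^R(S,u)\). By the induction hypothesis, that means \((q\cdot a)\cdot u^R\in S\), i.e.\ \(q\cdot(a u^R)\in S\). Since \(a u^R=(ua)^R\), this closes the induction.

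\textbf{Language equality.} From the start state \(F\), the automaton \(A^R\) accepts \(w\) iff \(s\in\delta^R(F,w)\). By the formula this holds iff \(s\cdot w^R\in F\), i.e.\ iff \(w^R\in L(A)\). Hence \(L(A^R)=L(A)^R\).

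\textbf{Permutation case.} If each letter \(a\) acts as a bijection \(\pi_a\) of \(Q\), then \(\{q\mid q\pi_a\in S\}\) is exactly the preimage \(\pi_a^{-1}(S)\). Preimage under a bijection is itself a bijection of \(2^Q\), with inverse \(S\mapsto\pi_a(S)\). So every letter permutes \(2^Q\), and \(A^R\) is a permutation automaton.

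For words, the general formula gives \(\delta^R(S,w)=\{q\mid q\pi_{w^R}\in S\}=\pi_{w^R}^{-1}(S)\).

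\textbf{Main obstacle.} The only subtle point is bookkeeping the order of composition in the inductive step, i.e.\ that appending \(a\) to \(w\) prepends it to \(w^R\). I will handle this by writing the extended transition as right-action composition, consistent with the paper's notation \(q\pi_w=q\cdot w\).
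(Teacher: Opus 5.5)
Your proposal is correct and follows essentially the same route as the paper. Both prove the word formula by induction on \(|w|\) with \(w=ua\) and \((ua)^{R}=au^{R}\), deduce \(L(A^{R})=L(A)^{R}\) from \(s\in\delta^{R}(F,w)\iff s\cdot w^{R}\in F\), and treat the permutation case as preimages under bijections with inverse \(S\mapsto \pi_a(S)\).
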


\begin{proof}
We first prove the word-level formula by induction on \(|w|\). If
\(w=\varepsilon\), then
\(\delta^{R}(S,\varepsilon)=S=\{\, q\in Q \mid q\cdot \varepsilon^{R}\in S \,\}\).
Now let \(w=ua\), where \(u\in\Sigma^{*}\) and \(a\in\Sigma\), and assume the
formula holds for \(u\). Then
\begin{align*}
\delta^{R}(S,ua)
 &= \delta^{R}(\delta^{R}(S,u),a) \\
 &= \{\, q\in Q \mid q\cdot a \in \delta^{R}(S,u) \,\} \\
 &= \{\, q\in Q \mid (q\cdot a)\cdot u^{R} \in S \,\} \\
 &= \{\, q\in Q \mid q\cdot (ua)^{R} \in S \,\}.
\end{align*}
This proves
\(\delta^{R}(S,w)=\{\, q\in Q \mid q\cdot w^{R}\in S \,\} \) for all
\(S\subseteq Q,\ w\in\Sigma^{*}\).

Hence, for every \(w\in\Sigma^{*}\),
\begin{align*}
w\in L(A^{R})
 &\iff \delta^{R}(F,w)\in\mathcal{F} \\
 &\iff s\in \delta^{R}(F,w) \\
 &\iff s\cdot w^{R}\in F \\
 &\iff w^{R}\in L(A) \\
 &\iff w\in L(A)^{R}.
\end{align*}
Therefore \(L(A^{R})=L(A)^{R}\).

If \(A\) is a permutation automaton, then for each \(a\in\Sigma\) the map
\(q\mapsto q\cdot a\) is a bijection of \(Q\). Hence
\(\delta^{R}(S,a)=\{\, q\in Q \mid q\cdot a\in S \,\}=a^{-1}(S)\).
The inverse of the map \(S\mapsto a^{-1}(S)\) on \(2^{Q}\) is
\(S\mapsto a(S)\), so each letter acts bijectively on \(2^{Q}\), and thus
\(A^{R}\) is a permutation automaton. Finally, for every \(w\in\Sigma^{*}\),
\(\delta^{R}(S,w)=\{\, q\in Q \mid q\pi_{w^{R}}\in S \,\}=\pi_{w^{R}}^{-1}(S)\). \qedbox
\end{proof}

We also use standard group-action notation. For \(n \ge 1\), let \(S_{n}\) denote
the symmetric group on \([n]\). If \(G \le S_{n}\), then \(G\) acts on \([n]\) and
on the family \(\binom{[n]}{k}\) of \(k\)-subsets of \([n]\) by
\(i\pi = \pi(i) \) and
\(X\pi = \{\, \pi(i) \mid i \in X \,\}\),
for \(i \in [n]\), \(X \in \binom{[n]}{k}\), and \(\pi \in G\). We write
\(iG := \{\, i\pi \mid \pi \in G \,\} \) and 
\(
XG := \{\, X\pi \mid \pi \in G \,\}
\)
for the corresponding orbits, and
\[
\operatorname{Stab}_{G}(i)
   := \{\, \pi \in G \mid i\pi = i \,\},
\qquad
\operatorname{Stab}_{G}(X)
   := \{\, \pi \in G \mid X\pi = X \,\}
\]
for the point stabilizer of \(i\) and the set stabilizer of \(X\), respectively.
Whenever letters of an input alphabet are interpreted as permutations, we
compose them as functions. Thus, if
\(w=c_{1}c_{2}\cdots c_{r}\),
then the permutation induced by \(w\) is
\(\pi_{w}=c_{r}\circ \cdots \circ c_{2}\circ c_{1}\)
so that \(q\pi_{w}=q\cdot w\) for every state \(q\). In particular, if each
letter acts on a family of subset-states by inverse image, then reading a
single letter \(c\) acts by \(c^{-1}\). For words, the precise formula is given
by Lemma~\ref{lem:reverse-subset}: the action of \(w\) is by
\(\pi_{w^{R}}^{-1}\).

We also use the standard facts that, for \(n\ge 2\), the cycle
\((1\,2\,\dots\,n)\) together with the transposition \((1\,2)\) generates
\(S_n\), and that the natural action of \(S_n\) on \(\binom{[n]}{r}\) is transitive
for every \(r\); see, for example,
Dixon--Mortimer~\cite[Chapter~1]{DixonMortimer1996}.

\section{Exact Spectrum of Reversal}
\label{sec:reversal}

We now determine the reversal spectrum exactly. The cases \(m=0\) and \(m=1\)
will be handled at the end of the section, so throughout this section we fix
integers \(m \ge 2\) and \(\alpha \ge 2\), and set
\(n := m+\alpha-1\).

Our witness family is based on the natural action of \(S_n\) on the family of
\(\alpha\)-subsets of \([n]\). Let
\(Q := \binom{[n]}{\alpha}\),
\(q_I := \{1,\dots,\alpha\}\), and \(
T_0 := \{1,\dots,\alpha-1\} \).
For each \(T \in \binom{[n]}{\alpha-1}\), define
\(\mathcal{S}(T) := \{\, X \in Q \mid T \subseteq X \,\}\).
In particular, \(\mathcal{S}(T_0)\) is the family of all \(\alpha\)-subsets of
\([n]\) containing \(T_0\).

Let the alphabet be \(\Sigma = \{a,b\}\), where the letter \(a\) acts on \([n]\)
as the \(n\)-cycle
\(a=(1\,2\,\dots\,n)\)
and the letter \(b\) acts on \([n]\) as the transposition
\(
b=(1\,2)\).
These actions induce permutations of \(Q\) by
\(
X \cdot a := \{\, ia \mid i \in X \,\}\), \(
X \cdot b := \{\, ib \mid i \in X \,\} \) for all \(
X \in Q\).

Finally, let
\(
F := \mathcal{S}(T_0)\),
and define
\(
A_{m,\alpha} := (Q,\Sigma,\cdot,q_I,F)\).
Since \(a\) and \(b\) generate \(S_n\)~\cite[Chapter~1]{DixonMortimer1996},
the automaton \(A_{m,\alpha}\) is a binary permutation automaton.

\begin{lemma}
\label{lem:reversal-forward-minimal}
The automaton \(A_{m,\alpha}\) is minimal and has exactly \(m\) final states.
In particular,
\(\asc(L(A_{m,\alpha})) = m\).
\end{lemma}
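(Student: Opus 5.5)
We prove the three claims in turn: the final states are counted, every state is reachable, and any two distinct states are distinguishable. Minimality then follows, and Proposition~\ref{prop:minimal-realizes-asc} gives \(\asc(L(A_{m,\alpha}))=|F|=m\).

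\emph{Counting and reachability.} An \(\alpha\)-subset \(X\) contains \(T_0\) iff \(X=T_0\cup\{j\}\) with \(j\in[n]\setminus T_0\). There are \(n-(\alpha-1)=m\) such \(j\), so \(|F|=m\). Since \(a\) and \(b\) generate \(S_n\), and \(S_n\) acts transitively on \(\binom{[n]}{\alpha}\), every \(X\in Q\) equals \(q_I\pi\) for some \(\pi\in S_n\). By Lemma~\ref{lem:positive-words-generate-group}, \(\pi=\pi_w\) for some positive word \(w\in\Sigma^*\), so \(X=q_I\cdot w\) is reachable.

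\emph{Distinguishability.} This is the main step. Take \(X\neq Y\) in \(Q\). We seek \(\pi\in S_n\) with \(X\pi\in F\) and \(Y\pi\notin F\), that is, \(T_0\subseteq X\pi\) but \(T_0\not\subseteq Y\pi\). Such a \(\pi\) is again realized by a positive word \(w\) (Lemma~\ref{lem:positive-words-generate-group}), and \(w\) then separates \(X\) from \(Y\). Because \(|X|=|Y|=\alpha\) and \(X\neq Y\), the set \(X\setminus Y\) is nonempty. Since \(\alpha\ge2\), we can choose an \((\alpha-1)\)-subset \(T\subseteq X\) with \(T\not\subseteq Y\): pick \(x\in X\setminus Y\) and let \(T\) consist of \(x\) together with any \(\alpha-2\) further elements of \(X\). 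Then \(T\subseteq X\) but \(x\in T\setminus Y\). Now take any \(\pi\in S_n\) with \(T\pi=T_0\). It exists because \(S_n\) is transitive on \(\binom{[n]}{\alpha-1}\), or simply take any bijection extending a map \(T\to T_0\). Then \(T_0=T\pi\subseteq X\pi\). Also \(x\pi\in T_0\) while \(x\pi\notin Y\pi\), so \(T_0\not\subseteq Y\pi\), as required.

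The argument has no real obstruction; the only care needed is that \(\alpha\ge2\) is used to build \(T\) containing \(x\). Hypothesis \(m\ge2\) is not needed for minimality, only for the intended range of \(m\). Combining reachability and pairwise distinguishability, \(A_{m,\alpha}\) is minimal, which completes the proof.
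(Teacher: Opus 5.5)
Your proposal is correct and follows the paper's proof essentially step for step. It uses the same count $|[n]\setminus T_0|=m$, the same reachability argument via transitivity of $S_n$ on $\binom{[n]}{\alpha}$ and Lemma~\ref{lem:positive-words-generate-group}, and the same distinguishing permutation, which sends an $(\alpha-1)$-subset of $X$ containing some $x\in X\setminus Y$ onto $T_0$. The differences are cosmetic: the paper also fixes the image $u\notin T_0$ of the leftover element of $X$ and derives $T_0\not\subseteq Y\pi$ via $\pi^{-1}$, whereas you observe directly that $T_0=T\pi\subseteq X\pi$ and $x\pi\in T_0\setminus Y\pi$.
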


\begin{proof}
First, \(F\) has exactly \(m\) elements. Indeed, an \(\alpha\)-subset \(X\) of
\([n]\) belongs to \(F=\mathcal{S}(T_0)\) if and only if
\(X = T_0 \cup \{u\}
\)
for some \(u \in [n]\setminus T_0\). Hence
\(
|F| = |[n]\setminus T_0|
     = n-(\alpha-1)
     = m\)

We next show that every state of \(A_{m,\alpha}\) is reachable. By
Lemma~\ref{lem:positive-words-generate-group}, every permutation in
\(\langle a,b\rangle=S_n\) is induced by some word in \(\{a,b\}^{*}\). Since
the natural action of \(S_n\) on \(\binom{[n]}{\alpha}\) is transitive, for
every \(X \in Q\) there exists a word \(w \in \{a,b\}^{*}\) such that
\(q_I \cdot w = X\).
Thus every state is reachable from \(q_I\).

It remains to show that distinct states are distinguishable. Let
\(X,Y \in Q\) with \(X \neq Y\). Choose some element
\(
x \in X \setminus Y\).
Since \(|X|=\alpha\), there exists an \((\alpha-1)\)-subset
\(W \subseteq X\) such that \(x \in W\). Choose any element
\(
u \in [n]\setminus T_0\).
Because \(|W|=|T_0|=\alpha-1\) and \(|X\setminus W|=1\), we may choose a
bijection \(W \to T_0\), send the unique element of \(X\setminus W\) to \(u\),
and extend this map to a permutation \(\pi \in S_n\). Then
\(
W\pi = T_0\) and 
\(
(X\setminus W)\pi = \{u\}\).

By Lemma~\ref{lem:positive-words-generate-group}, there exists a word
\(w \in \{a,b\}^{*}\) inducing \(\pi\). Therefore
\[
X \cdot w = X\pi = T_0 \cup \{u\} \in F.
\]

On the other hand, because \(x \in W\setminus Y\), we have \(W \nsubseteq Y\).
Moreover, \(W\pi=T_0\) and \(\pi\) is bijective, so \(\pi^{-1}(T_0)=W\). If
\(T_0 \subseteq Y\pi\), then applying \(\pi^{-1}\) would give \(W \subseteq Y\),
a contradiction. Hence
\(
T_0 \nsubseteq Y\pi = Y \cdot w\)
and thus
\(
Y \cdot w \notin F\).
So \(X\) and \(Y\) are distinguishable.

We have shown that all states are reachable and pairwise distinguishable, so
\(A_{m,\alpha}\) is minimal. Proposition~\ref{prop:minimal-realizes-asc} now
yields
\(
\asc(L(A_{m,\alpha})) = |F| = m\). \qedbox
\end{proof}

\begin{lemma}
\label{lem:reversal-reachable-stars}
The reachable states of the reverse automaton \(A_{m,\alpha}^{R}\) are exactly
the stars
\[
\mathcal{S}(T)
\qquad
\left(T \in \binom{[n]}{\alpha-1}\right).
\]
More precisely, for every \(T \in \binom{[n]}{\alpha-1}\) and every
\(c \in \{a,b\}\),
\(\delta^{R}(\mathcal{S}(T),c)=\mathcal{S}(Tc^{-1})\).
\end{lemma}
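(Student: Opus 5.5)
The plan has three parts: first the one-letter transition formula, then closure of the family of stars under reverse transitions, and finally transitivity of that action via Lemma~\ref{lem:positive-words-generate-group}.

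\emph{Step 1: the one-letter formula.} Fix \(T\in\binom{[n]}{\alpha-1}\) and \(c\in\{a,b\}\). By Lemma~\ref{lem:reverse-subset}, since \(A_{m,\alpha}\) is a permutation automaton,
\[
\delta^{R}(\mathcal{S}(T),c)=\{\,X\in Q \mid X\cdot c\in\mathcal{S}(T)\,\}=\{\,X\in Q\mid T\subseteq Xc\,\}.
\]
Because \(c\) acts bijectively on \([n]\), we have \(T\subseteq Xc\) if and only if \(Tc^{-1}\subseteq X\). Here \(Tc^{-1}\) denotes the image of \(T\) under the inverse permutation, and it is again an \((\alpha-1)\)-subset. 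Hence \(\delta^{R}(\mathcal{S}(T),c)=\mathcal{S}(Tc^{-1})\). One point deserves care: the action of \(c\) on \(Q\) is induced elementwise from its action on \([n]\), so inverse images commute with taking subsets. This is the only place where the composition convention matters.

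\emph{Step 2: reachable states are stars.} The initial state of \(A_{m,\alpha}^{R}\) is \(F=\mathcal{S}(T_0)\), which is a star. By Step~1 the family of stars is closed under every letter. Induction on word length then shows that every reachable state is \(\mathcal{S}(T)\) for some \(T\). More precisely, \(\delta^{R}(\mathcal{S}(T_0),w)=\mathcal{S}(T_0\pi)\), where \(\pi\) is the product of the inverses of the letters of \(w\) in the appropriate order.

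\emph{Step 3: every star is reachable.} Given \(T\in\binom{[n]}{\alpha-1}\), transitivity of \(S_n\) on \((\alpha-1)\)-subsets yields \(\sigma\in S_n\) with \(T_0\sigma=T\). By Step~1, reading \(w=c_1\cdots c_r\) sends \(\mathcal{S}(T_0)\) to \(\mathcal{S}(T_0c_1^{-1}c_2^{-1}\cdots c_r^{-1})\), reading left-to-right as successive actions. So it suffices to write \(\sigma\) as such a product of inverses. Equivalently, \(\sigma^{-1}\) must be a positive word in \(a,b\) in reversed order. This exists by Lemma~\ref{lem:positive-words-generate-group}, since \(\langle a,b\rangle=S_n\). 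Alternatively, \(a^{-1}=a^{n-1}\) and \(b^{-1}=b\), so the monoid generated by \(a^{-1},b^{-1}\) is already all of \(S_n\).

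I expect the main obstacle to be keeping the bookkeeping of the left and right action conventions consistent. The argument itself is direct; the risk is only in stating the composition order correctly, which the last remark sidesteps, since then the order does not matter.

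Finally, a star \(\mathcal{S}(T)\) determines \(T\) as the intersection of its members when \(m\ge2\), because it has \(m\ge 2\) members. Thus distinct \(T\) give distinct stars, although this is not needed for the set equality claimed in the lemma.
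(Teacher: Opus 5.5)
Your proposal is correct and follows the paper's proof essentially step for step. You use the one-letter computation $T\subseteq Xc \iff Tc^{-1}\subseteq X$, closure of the stars under reverse transitions starting from $F=\mathcal{S}(T_0)$, and reachability of every star via transitivity of $S_n$ on $\binom{[n]}{\alpha-1}$ together with Lemma~\ref{lem:positive-words-generate-group} applied to $\langle a^{-1},b\rangle=S_n$. Your only deviation is in the distinctness remark: you recover $T$ as the intersection of the $m\ge 2$ members of $\mathcal{S}(T)$, whereas the paper exhibits an explicit separating set $U\cup\{y\}$. Both arguments work, and the paper does later cite this lemma for pairwise distinctness when counting the accepting stars, so it is worth keeping.
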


\begin{proof}
By Lemma~\ref{lem:reverse-subset}, the reverse automaton has the form
\(A_{m,\alpha}^{R} = \bigl(2^{Q},\Sigma,\delta^{R},F,\mathcal{F}\bigr)\),
with initial state
\(F=\mathcal{S}(T_0)\).

Let \(T \in \binom{[n]}{\alpha-1}\) and let \(c \in \{a,b\}\). For
\(X \in Q\), we have
\begin{align*}
X \in \delta^{R}(\mathcal{S}(T),c)
&\iff X \cdot c \in \mathcal{S}(T) \\
&\iff T \subseteq X \cdot c \\
&\iff Tc^{-1} \subseteq X \\
&\iff X \in \mathcal{S}(Tc^{-1}).
\end{align*}
Hence
\(\delta^{R}(\mathcal{S}(T),c)=\mathcal{S}(Tc^{-1})\),
as claimed.

It follows that every state reachable from \(F=\mathcal{S}(T_0)\) is again a
star of the form \(\mathcal{S}(T)\), since reading one letter in the reverse
automaton sends a star to a star. Thus every reachable state of
\(A_{m,\alpha}^{R}\) belongs to the set
\(\left\{\, \mathcal{S}(T) \mid T \in \binom{[n]}{\alpha-1} \,\right\}\).

Conversely, because the letter \(a\) acts on stars by \(a^{-1}\) and the letter
\(b\) acts on stars by \(b^{-1}=b\), the transformations of the centers
generated by the reverse automaton form the group
\(\langle a^{-1}, b \rangle = \langle a,b \rangle = S_n\).
By Lemma~\ref{lem:positive-words-generate-group}, every element of this group
is induced on the centers by some word in \(\{a,b\}^{*}\). Since the natural
action of \(S_n\) on \(\binom{[n]}{\alpha-1}\) is transitive, for every
\(T \in \binom{[n]}{\alpha-1}\) there exists a word \(w \in \{a,b\}^{*}\)
such that reading \(w\) in the reverse automaton sends
\(\mathcal{S}(T_0)\) to \(\mathcal{S}(T)\). Hence every such star is reachable.

Therefore the reachable states of \(A_{m,\alpha}^{R}\) are exactly the stars
\[
\mathcal{S}(T)
\qquad
\left(T \in \binom{[n]}{\alpha-1}\right).
\]

Finally, these stars are pairwise distinct. Indeed, if
\(T,U \in \binom{[n]}{\alpha-1}\) with \(T \neq U\), choose
\(x \in T \setminus U\). Since
\(
|[n]\setminus U| = n-(\alpha-1)=m \ge 2\),
there exists some \(y \in [n]\setminus (U \cup \{x\})\). Then
\(
U \cup \{y\} \in \mathcal{S}(U) \)
but \(
U \cup \{y\} \notin \mathcal{S}(T)\),
because \(x \in T\) but \(x \notin U \cup \{y\}\). Thus
\(\mathcal{S}(T) \neq \mathcal{S}(U)\). \qedbox
\end{proof}

\begin{lemma}
\label{lem:reversal-accepting-stars}
Among the reachable states of \(A_{m,\alpha}^{R}\), the accepting states are
exactly the stars
\[
\mathcal{S}(T)
\qquad
\left(T \in \binom{[n]}{\alpha-1},\ T \subseteq q_I\right).
\]
Consequently, the reachable part of \(A_{m,\alpha}^{R}\) has exactly
\(
\binom{\alpha}{\alpha-1}=\alpha
\)
accepting states.
\end{lemma}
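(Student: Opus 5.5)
By Lemma~\ref{lem:reverse-subset}, a subset-state \(\mathcal{X}\subseteq Q\) of \(A_{m,\alpha}^{R}\) is accepting exactly when \(q_I\in\mathcal{X}\). By Lemma~\ref{lem:reversal-reachable-stars}, the reachable states are precisely the stars \(\mathcal{S}(T)\) with \(T\in\binom{[n]}{\alpha-1}\). The first step is therefore to rewrite the acceptance condition for a reachable state. For such a state \(\mathcal{S}(T)\), we have
\[
\mathcal{S}(T)\in\mathcal{F}\iff q_I\in\mathcal{S}(T)\iff T\subseteq q_I,
\]
which holds directly from the definition of a star, since \(q_I\in Q\) is an \(\alpha\)-subset. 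This gives the first assertion: the accepting reachable states are exactly the stars \(\mathcal{S}(T)\) with \(T\subseteq q_I\).

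For the count, two further facts are needed. First, the number of \((\alpha-1)\)-subsets \(T\) of the \(\alpha\)-set \(q_I\) is \(\binom{\alpha}{\alpha-1}=\alpha\). Second, distinct such \(T\) give distinct states of the automaton. This second point is the only place requiring care, since a priori different centers could yield the same family of subsets. Here we invoke the final paragraph of the proof of Lemma~\ref{lem:reversal-reachable-stars}. That argument, which uses \(m\ge 2\), shows \(T\mapsto\mathcal{S}(T)\) is injective on \(\binom{[n]}{\alpha-1}\). Hence the \(\alpha\) choices of \(T\) yield exactly \(\alpha\) distinct accepting reachable states.

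I expect no real obstacle here. The substantive work, namely reachability and the distinctness of stars, was already done in Lemma~\ref{lem:reversal-reachable-stars}. The only thing to verify is that the accepting condition \(s\in S\) from Lemma~\ref{lem:reverse-subset} is instantiated with \(s=q_I\), the initial state of the forward automaton.
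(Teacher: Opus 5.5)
Your proposal is correct and matches the paper's proof step for step. You instantiate the acceptance condition of Lemma~\ref{lem:reverse-subset} with \(s=q_I\), use that the reachable states are exactly the stars to reduce acceptance to \(T\subseteq q_I\), count \(\binom{\alpha}{\alpha-1}=\alpha\) centers, and cite the injectivity of \(T\mapsto\mathcal{S}(T)\) (which uses \(m\ge 2\)) from Lemma~\ref{lem:reversal-reachable-stars} for distinctness.
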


\begin{proof}
By Lemma~\ref{lem:reverse-subset}, a state \(S \subseteq Q\) of the reverse
automaton \(A_{m,\alpha}^{R}\) is accepting if and only if
\(q_I \in S\).
By Lemma~\ref{lem:reversal-reachable-stars}, every reachable state of
\(A_{m,\alpha}^{R}\) has the form \(\mathcal{S}(T)\) for some
\(T \in \binom{[n]}{\alpha-1}\). Therefore a reachable state
\(\mathcal{S}(T)\) is accepting if and only if
\(q_I \in \mathcal{S}(T)\).
By the definition of \(\mathcal{S}(T)\), this is equivalent to
\(T \subseteq q_I\).
Hence the accepting reachable states are exactly the stars
\[
\mathcal{S}(T)
\qquad
\left(T \in \binom{[n]}{\alpha-1},\ T \subseteq q_I\right).
\]

Since \(|q_I|=\alpha\), the number of \((\alpha-1)\)-subsets of \(q_I\) is
\(\binom{\alpha}{\alpha-1}=\alpha\).
Moreover, these stars are pairwise distinct by
Lemma~\ref{lem:reversal-reachable-stars}. Therefore the reachable part of
\(A_{m,\alpha}^{R}\) has exactly \(\alpha\) accepting states. \qedbox
\end{proof}

\begin{lemma}
\label{lem:reversal-reachable-minimal}
The reachable part of \(A_{m,\alpha}^{R}\) is minimal.
\end{lemma}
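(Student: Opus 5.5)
Since Lemma~\ref{lem:reversal-reachable-stars} already shows that every state of the reachable part of \(A_{m,\alpha}^{R}\) is reachable (by definition) and that the reachable states are exactly the pairwise distinct stars \(\mathcal{S}(T)\), \(T\in\binom{[n]}{\alpha-1}\), the only thing left is pairwise distinguishability. I would argue exactly as in the proof of Lemma~\ref{lem:reversal-forward-minimal}, now working with \((\alpha-1)\)-subsets (the centers) instead of \(\alpha\)-subsets. By Lemmas~\ref{lem:reversal-reachable-stars} and~\ref{lem:reversal-accepting-stars}, the reachable part is isomorphic to the automaton on \(\binom{[n]}{\alpha-1}\) with initial center \(T_0\). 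In it, letter \(c\) acts on centers by \(T\mapsto Tc^{-1}\), and a center \(T\) is accepting iff \(T\subseteq q_I\).

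Take two distinct centers \(T\neq U\) and pick \(x\in T\setminus U\). I want a permutation \(\pi\in S_n\) with \(T\pi\subseteq q_I\) but \(U\pi\not\subseteq q_I\). To build it, fix the unique element \(z\) of \([n]\setminus q_I\)'s complement structure as follows. Since \(|q_I|=\alpha\) and \(n=m+\alpha-1\ge\alpha+1\), the set \([n]\setminus q_I\) is nonempty; choose \(z\in[n]\setminus q_I\). Also pick \(y\in U\setminus T\), which exists because \(|T|=|U|\) and \(T\neq U\). Define \(\pi\) by sending \(T\) bijectively into \(q_I\) (possible since \(|T|=\alpha-1<\alpha\)) and sending \(y\notin T\) to \(z\). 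Then extend arbitrarily to a permutation of \([n]\); this is consistent because the images of \(T\) lie in \(q_I\) while \(z\notin q_I\). Now \(T\pi\subseteq q_I\), while \(z=y\pi\in U\pi\) gives \(U\pi\not\subseteq q_I\).

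Next I would realize \(\pi\) by a word. The centers are acted on by the group \(\langle a^{-1},b\rangle=S_n\), as observed in Lemma~\ref{lem:reversal-reachable-stars}. By Lemma~\ref{lem:positive-words-generate-group}, there is a word \(w\in\{a,b\}^{*}\) whose induced action on centers is \(\pi\); concretely, by Lemma~\ref{lem:reverse-subset}, reading \(w\) sends center \(T\) to \(T\pi_{w^R}^{-1}\), and every element of \(S_n\) arises as some \(\pi_{w^R}^{-1}\). Then \(\delta^{R}(\mathcal{S}(T),w)=\mathcal{S}(T\pi)\) is accepting, and \(\delta^{R}(\mathcal{S}(U),w)=\mathcal{S}(U\pi)\) is not, by Lemma~\ref{lem:reversal-accepting-stars}. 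So the two states are distinguishable, and together with reachability this gives minimality.

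The only delicate point is bookkeeping the direction of the action (inverse images and reversed words). Surjectivity of \(w\mapsto\pi_{w^R}^{-1}\) onto \(S_n\) resolves it without computing anything explicit. The existence of \(z\) uses \(m\ge2\).
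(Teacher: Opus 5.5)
Your proof is correct and essentially the paper's own: the paper also picks an element of one center outside the other, builds a permutation $\pi$ that sends the other center into $q_I$ (specifically onto $T_0$) and that element to $\alpha+1\notin q_I$, and realizes $\pi$ by a word using $\langle a^{-1},b\rangle=S_n$ and Lemma~\ref{lem:positive-words-generate-group}. Two blemishes do not affect the argument: your choice of $x\in T\setminus U$ is never used, and the sentence about ``the unique element $z$'' is garbled.
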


\begin{proof}
By Lemma~\ref{lem:reversal-reachable-stars}, the reachable states of
\(A_{m,\alpha}^{R}\) are exactly the stars
\[
\mathcal{S}(T)
\qquad
\left(T \in \binom{[n]}{\alpha-1}\right).
\]
Thus every state of the reachable part is reachable, and it remains only to
show that any two distinct reachable states are distinguishable.

Let \(\mathcal{S}(S)\) and \(\mathcal{S}(T)\) be distinct reachable states.
Then
\[
S,T \in \binom{[n]}{\alpha-1} \] and 
\(
S \neq T\).

Choose an element
\(
t \in T \setminus S\).
Since \(m \ge 2\), we have
\(n = m+\alpha-1 \ge \alpha+1\),
and therefore
\(
\alpha+1 \notin q_I = \{1,\dots,\alpha\}\).

Because \(t \notin S\) and \(|S|=\alpha-1\), there exists a permutation
\(\pi \in S_n\) such that
\(S\pi = T_0 = \{1,\dots,\alpha-1\} \) and \(
t\pi = \alpha+1\).
Indeed, one may choose any bijection from \(S\) onto \(T_0\), send \(t\) to
\(\alpha+1\), and extend the resulting map to a permutation of \([n]\).

By Lemma~\ref{lem:reversal-reachable-stars}, the letter \(a\) acts on star
centers by \(a^{-1}\), and the letter \(b\) acts on star centers by
\(b^{-1}=b\). Hence the transformations of the centers induced by words in the
reverse automaton form the submonoid of \(S_n\) generated by \(a^{-1}\) and
\(b\). Since
\(\langle a^{-1},b\rangle=\langle a,b\rangle=S_n\),
Lemma~\ref{lem:positive-words-generate-group} yields a word
\(w\in\{a,b\}^{*}\) whose action on the centers is exactly \(\pi\). Therefore,
for every \(U\in\binom{[n]}{\alpha-1}\),
\(\delta^{R}(\mathcal{S}(U),w)=\mathcal{S}(U\pi)\).

Applying this with \(U=S\), we obtain
\(\delta^{R}(\mathcal{S}(S),w)=\mathcal{S}(S\pi)=\mathcal{S}(T_0)\).
Since \(T_0 \subseteq q_I\), Lemma~\ref{lem:reversal-accepting-stars} shows
that \(\mathcal{S}(T_0)\) is accepting.

On the other hand, \(t \in T\) implies
\(\alpha+1 = t\pi \in T\pi\).
Because \(\alpha+1 \notin q_I\), it follows that
\(T\pi \nsubseteq q_I\).
Hence Lemma~\ref{lem:reversal-accepting-stars} shows that
\(\delta^{R}(\mathcal{S}(T),w)=\mathcal{S}(T\pi)
\)
is nonaccepting.

Thus the word \(w\) distinguishes \(\mathcal{S}(S)\) and \(\mathcal{S}(T)\).
Therefore all reachable states of \(A_{m,\alpha}^{R}\) are pairwise
distinguishable, and the reachable part of \(A_{m,\alpha}^{R}\) is minimal. \qedbox
\end{proof}

\begin{theorem}
\label{thm:reversal-attainability}
For every \(m \ge 2\) and every \(\alpha \ge 2\), we have
\(
\alpha \in g^{\asc}_{R,\mathrm{PFA}}(m)\).
Equivalently, the language \(L(A_{m,\alpha})\) satisfies
\[
\asc(L(A_{m,\alpha})) = m
\qquad\text{and}\qquad
\asc(L(A_{m,\alpha})^{R}) = \alpha.
\]
\end{theorem}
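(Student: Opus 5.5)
The plan is to assemble the theorem directly from the four preceding lemmas together with Proposition~\ref{prop:minimal-realizes-asc}. Little new work should be needed; the point is to check carefully that each lemma applies to the right automaton.

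For the forward side, Lemma~\ref{lem:reversal-forward-minimal} already states that \(A_{m,\alpha}\) is a minimal binary permutation automaton with exactly \(m\) final states, and hence \(\asc(L(A_{m,\alpha}))=m\). Thus \(K:=L(A_{m,\alpha})\) is a language accepted by a permutation automaton with \(\asc(K)=m\), as the definition of \(g^{\asc}_{R,\mathrm{PFA}}(m)\) requires.

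For the reverse side, I will take the reachable part \(B\) of \(A_{m,\alpha}^{R}\). That is, I restrict the state set to the stars \(\mathcal{S}(T)\) with \(T\in\binom{[n]}{\alpha-1}\), using \(\delta^{R}\) as transition function, \(F=\mathcal{S}(T_0)\) as initial state, and the stars containing \(q_I\) as accepting states.

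\begin{itemize}
\item This restriction is a well-defined complete DFA. Lemma~\ref{lem:reversal-reachable-stars} shows that each letter maps stars to stars, so the restricted state set is closed under transitions.
\item \(B\) accepts \(K^{R}\). Discarding unreachable states does not change the language, and \(L(A_{m,\alpha}^{R})=K^{R}\) by Lemma~\ref{lem:reverse-subset}.
\item \(B\) is minimal by Lemma~\ref{lem:reversal-reachable-minimal}.
\item \(B\) has exactly \(\alpha\) accepting states by Lemma~\ref{lem:reversal-accepting-stars}.
\end{itemize}

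Proposition~\ref{prop:minimal-realizes-asc} applied to \(B\) then gives \(\asc(K^{R})=\alpha\). Consequently \(\alpha\in g^{\asc}_{R,\mathrm{PFA}}(m)\).

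The only step needing care is the passage from the full subset automaton \(A_{m,\alpha}^{R}\) to its reachable part, which is where we invoke the proposition. The proposition requires a genuinely minimal DFA, and the full automaton on \(2^{Q}\) is not minimal. So I will state explicitly that restricting to reachable states preserves both the language and the transition structure, and that Lemma~\ref{lem:reversal-reachable-minimal} gives minimality of exactly this restricted automaton. Once that is in place, the two identities of the theorem follow, and with them the membership claim.
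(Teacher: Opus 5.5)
Your proposal is correct and is essentially the paper's own proof: you get \(\asc(L(A_{m,\alpha}))=m\) from Lemma~\ref{lem:reversal-forward-minimal} and Proposition~\ref{prop:minimal-realizes-asc}, and \(\asc(L(A_{m,\alpha})^{R})=\alpha\) by applying the same proposition to the reachable part of \(A_{m,\alpha}^{R}\), which Lemmas~\ref{lem:reversal-reachable-stars}--\ref{lem:reversal-reachable-minimal} show is a minimal DFA for \(L(A_{m,\alpha})^{R}\) with exactly \(\alpha\) final states. Your explicit check that the star family is closed under \(\delta^{R}\), so that the restriction is a well-defined complete DFA, is a small point the paper leaves implicit.
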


\begin{proof}
By Lemma~\ref{lem:reversal-forward-minimal}, the automaton \(A_{m,\alpha}\) is
minimal with exactly \(m\) final states. Hence
\(\asc(L(A_{m,\alpha})) = m
\)
by Proposition~\ref{prop:minimal-realizes-asc}.

By Lemma~\ref{lem:reverse-subset}, the reverse subset construction
\(A_{m,\alpha}^{R}\) accepts the language \(L(A_{m,\alpha})^{R}\). By
Lemmata~\ref{lem:reversal-reachable-stars},
\ref{lem:reversal-accepting-stars}, and
\ref{lem:reversal-reachable-minimal}, the reachable part of
\(A_{m,\alpha}^{R}\) is a minimal DFA with exactly \(\alpha\) final states.
Since removing unreachable states does not change the accepted language,
Proposition~\ref{prop:minimal-realizes-asc} yields
\(
\asc(L(A_{m,\alpha})^{R}) = \alpha\).
Therefore \(\alpha \in g^{\asc}_{R,\mathrm{PFA}}(m)\). \qedbox 
\end{proof}

\begin{corollary}
\label{cor:reversal-spectrum}
We have
\[
g^{\asc}_{R,\mathrm{PFA}}(m)=
\begin{cases}
\{0\}, & \text{if } m=0,\\[1mm]
\{1\}, & \text{if } m=1,\\[1mm]
\mathbb{N}_{\ge 2}, & \text{if } m\ge 2.
\end{cases}
\]
In particular, this proves the conjecture of Rauch and
Holzer~\cite[Conjecture~3.25]{RauchHolzer2023}.
\end{corollary}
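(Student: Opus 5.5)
The plan is to treat the three cases separately. Each case splits into a containment ``\(\supseteq\)'', shown by witnesses, and a containment ``\(\subseteq\)'', shown by general facts about reversal.

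For \(m\ge2\), the inclusion \(\mathbb{N}_{\ge2}\subseteq g^{\asc}_{R,\mathrm{PFA}}(m)\) is exactly Theorem~\ref{thm:reversal-attainability}. For the reverse inclusion I have to exclude \(\alpha=0\) and \(\alpha=1\).
\begin{itemize}
\item The value \(\alpha=0\) is impossible. We have \(\asc(K^R)=0\) if and only if \(K^R=\emptyset\), which holds if and only if \(K=\emptyset\), and that would force \(m=0\). This uses that a DFA with no final states accepts \(\emptyset\) and, conversely, that the minimal DFA of \(\emptyset\) has no final state (Proposition~\ref{prop:minimal-realizes-asc}).
\item The value \(\alpha=1\) is the magic number established by Rauch and Holzer~\cite{RauchHolzer2023}. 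The paper records this as previously known, and I will cite it rather than reprove it.
\end{itemize}
If a self-contained argument were desired, the idea would be as follows. Suppose \(K^R\) has a minimal permutation DFA (by Lemma~\ref{lem:reverse-subset}, \(K^R\) is accepted by a permutation automaton, whose minimal quotient is again a permutation automaton) with a single final state \(f\). Then \(K^R\) is a single ``\(f\)-class''. Its reverse is recognized by the reversed automaton, which is deterministic because the letters are bijections and which has the unique initial state \(f\). So \(K\) is accepted by a permutation DFA whose only final state is the original initial state, which makes \(\asc(K)=1\) and contradicts \(m\ge2\). I expect the main care to be needed here, in checking that the reversed permutation automaton is minimal, or at least that it shows \(\asc(K)\le1\). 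The bound \(\asc(K)\le 1\) already suffices, together with \(K\neq\emptyset\), to conclude \(m=1\).

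For \(m=0\), \(\asc(K)=0\) forces \(K=\emptyset\), so \(K^R=\emptyset\) and \(\alpha=0\). The witness is the one-state permutation automaton with no final state.

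For \(m=1\), the language \(K\) is nonempty, so \(\alpha\ge1\). The same reversal argument, run in the other direction, shows \(\alpha\le1\). Take the minimal permutation DFA of \(K\), which has a single final state \(f\). Reversing it gives a permutation DFA with initial state \(f\) and unique final state \(s\), and this DFA accepts \(K^R\). Hence \(\asc(K^R)\le1\), so \(\alpha=1\). The witness is the one-state automaton accepting \(\Sigma^*\), whose language is its own reversal and has \(\asc=1\).
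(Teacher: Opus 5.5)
Your proof is correct. Its case split, witnesses, and handling of \(\alpha=0\) and \(\alpha\ge2\) match the paper's. The real difference lies in the two facts about the value \(1\). The paper cites \cite[Theorem~3.24]{RauchHolzer2023} for \(\asc(K)=1\Rightarrow\asc(K^R)=1\), and \cite[Lemma~3.23]{RauchHolzer2023} to exclude \(\alpha=1\) when \(m\ge2\). You prove the first directly: you reverse the minimal permutation DFA of \(K\), and because letters are bijections, the result is a complete deterministic permutation automaton with initial state \(f\) and sole final state \(s\). You then observe that the same argument applied to \(K^R\) gives the second fact, since \(K^R\) is a permutation language by Lemma~\ref{lem:reverse-subset}. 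This makes the corollary self-contained apart from Theorem~\ref{thm:reversal-attainability}. It also shows that for permutation languages \(\asc(K)=1\) holds if and only if \(\asc(K^R)=1\), so the magic value \(1\) for \(m\ge2\) is just the case \(m=1\) read backwards. The paper's version is shorter but rests on external results. You use one ingredient in both directions that you should state and justify explicitly: the minimal DFA of a language accepted by a permutation automaton is again a permutation automaton. The reason is that the reachable part is an orbit of the group generated by the letters, and each letter induces a surjection, hence a bijection, on the finitely many Nerode classes. Without this fact, \(\asc=1\) only guarantees some DFA with one final state, not one that can be reversed deterministically.
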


\begin{proof}
If \(m=0\), then the only language \(K\) with \(\asc(K)=0\) is the empty
language. Hence \(K^{R}=\emptyset\), and so
\(g^{\asc}_{R,\mathrm{PFA}}(0)=\{0\}\).

If \(m=1\), then the unary language \(a^{*}\) is accepted by the one-state
all-final permutation automaton, so
\(1\in g^{\asc}_{R,\mathrm{PFA}}(1)\).
Conversely, if \(K\) is accepted by a permutation automaton and
\(\asc(K)=1\), then \(\asc(K^{R})=1\) by
\cite[Theorem~3.24]{RauchHolzer2023}. Therefore
\(g^{\asc}_{R,\mathrm{PFA}}(1)=\{1\}\).

Now let \(m \ge 2\). By Theorem~\ref{thm:reversal-attainability}, every
\(\alpha \ge 2\) belongs to \(g^{\asc}_{R,\mathrm{PFA}}(m)\). On the other
hand, \(\alpha=0\) is impossible because \(\asc(K)=m\ge 2\) implies
\(K\neq \emptyset\), and therefore \(K^{R}\neq \emptyset\). Also \(\alpha=1\)
is impossible by \cite[Lemma~3.23]{RauchHolzer2023}. Therefore
\[
g^{\asc}_{R,\mathrm{PFA}}(m)=\mathbb{N}_{\ge 2}
\qquad (m\ge 2).
\]
This is exactly the statement conjectured in
\cite[Conjecture~3.25]{RauchHolzer2023}. \qedbox
\end{proof}

\paragraph*{Worked example with \(m=3\).}
Since the case \(m=2\) was already settled by Rauch and
Holzer~\cite[Theorem~3.24]{RauchHolzer2023}, we record a small example
with \(m=3\). Consider the witness \(A_{3,4}\). Here
\(
n = m+\alpha-1 = 3+4-1 = 6\).
Write \(1234\) for \(\{1,2,3,4\}\), and similarly for other subsets of
\([6]\). Then
\[
q_I=1234,
\qquad
T_0=123,
\qquad
F=\mathcal{S}(123)=\{1234,1235,1236\}.
\]
Hence \(A_{3,4}\) has exactly \(m=3\) final states.

In the reverse automaton \(A_{3,4}^{R}\), the initial state is
\(\mathcal{S}(123)\). By Lemma~\ref{lem:reversal-reachable-stars}, the letters
act on star centers by inverse image, so
\[
\delta^{R}(\mathcal{S}(T),a)=\mathcal{S}(Ta^{-1}),
\qquad
\delta^{R}(\mathcal{S}(T),b)=\mathcal{S}(Tb).
\]
Starting from \(\mathcal{S}(123)\), repeated \(a\)-moves give
\[
\mathcal{S}(123)\xrightarrow{a}\mathcal{S}(126)\xrightarrow{a}\mathcal{S}(156)
\xrightarrow{a}\mathcal{S}(456)\xrightarrow{a}\mathcal{S}(345)
\xrightarrow{a}\mathcal{S}(234).
\]
Thus \(\mathcal{S}(234)\) is explicitly reachable. Applying \(b\) once more
gives
\[
\mathcal{S}(234)\xrightarrow{b}\mathcal{S}(134),
\]
and a direct computation shows that
\[
\delta^{R}(\mathcal{S}(123),a^{2}ba^{4})=\mathcal{S}(124).
\]
So all four stars whose centers are \(3\)-subsets of \(q_I=1234\) are
explicitly reachable:
\[
\mathcal{S}(123),\quad
\mathcal{S}(124),\quad
\mathcal{S}(134),\quad
\mathcal{S}(234).
\]
These stars are
\[
\begin{aligned}
\mathcal{S}(123)&=\{1234,1235,1236\}, &
\mathcal{S}(124)&=\{1234,1245,1246\}, \\
\mathcal{S}(134)&=\{1234,1345,1346\}, &
\mathcal{S}(234)&=\{1234,2345,2346\}.
\end{aligned}
\]
By Lemma~\ref{lem:reversal-accepting-stars}, a reachable star
\(\mathcal{S}(T)\) is accepting if and only if \(T \subseteq q_I\). Hence
these four explicitly listed stars are exactly the accepting states of the
reachable part of \(A_{3,4}^{R}\). Therefore that reachable part has exactly
\(\binom{4}{3}=4
\)
accepting states. In particular,
\(
\asc(L(A_{3,4}))=3 \) and 
\(
\asc(L(A_{3,4})^{R})=4\).
This gives a concrete illustration of
Theorem~\ref{thm:reversal-attainability} and
Corollary~\ref{cor:reversal-spectrum} in a case \(m=3\) beyond the previously
settled case \(m=2\).

\bibliographystyle{splncs04}
\bibliography{references}

\end{document}